\def\BibTeX{{\rm B\kern-.05em{\sc i\kern-.025em b}\kern-.08em
    T\kern-.1667em\lower.7ex\hbox{E}\kern-.125emX}}
\def\Hinf{$\cal {H}_\infty~$}
\def\Hinfty{{{\cal{H}}_{\infty}}}
\def\grad{\V{\bigtriangledown}}
\def\div{\grad \cdot }
\def\s={\stackrel{s}{=}}
\newcommand{\curly}[1]{{\cal{#1}}}
\newcommand{\V}[1]{{\boldsymbol{#1}}}
\newcommand{\bspace}[1]{{\mathbb{#1}}}
\newcommand{\norm}[2]{\lVert#1\rVert_{#2}}
\newcommand{\inprod}[2]{\left<{#1},{#2}\right>}
\newcommand{\mat}[2]{\left[\begin{array}{#1} #2\end{array}\right]}
\def\alf{\frac{a+b}{2}}
\newtheorem{thm}{Theorem}[section]
\newtheorem{lem}[thm]{Lemma}
\theoremstyle{remark}
\begin{document}

\title{Model Reduction of Turbulent Fluid Flows\\ Using the Supply Rate}

\author{A. S. Sharma\protect{\footnote{Departments of Electrical and Electronic Engineering; Aeronautics, South Kensington Campus, Imperial College London, London SW7 2BY, UK;\emph{ ati@imperial.ac.uk}}}}
\maketitle

\begin{abstract}
A method for finding reduced-order approximations of turbulent flow models is presented. The method preserves bounds on the production of turbulent energy in the sense of the $\curly{L}_2$ norm of perturbations from a notional laminar profile. This is achieved by decomposing the Navier-Stokes system into a feedback arrangement between the linearised system and the remaining, normally neglected, nonlinear part. The linear system is reduced using a method similar to balanced truncation, but preserving bounds on the supply rate. The method involves balancing two algebraic Riccati equations. The bounds are then used to derive bounds on the turbulent energy production.
An example of the application of the procedure to flow through a long straight pipe is presented. Comparison shows that the new method approximates the supply rate at least as well as, or better than, canonical balanced truncation.
\end{abstract}

\vspace{1cm}
\emph{Keywords:} Model Reduction, Proper Orthogonal Decomposition, Sector Bounding, Supply Rate, Passivity, Flow Control, Hagen-Poiseuille Flow, Pipe Flow, Turbulence, Fluids, Robust Control\\

\section{Introduction}
Classically, the study of hydrodynamic stability has proceeded via linearisation and subsequent modal analysis, where the asymptotic behaviour of small perturbations to a steady or slowly-varying base flow are considered.
This asymptotic behaviour is determined by the eigenvalues of a linear operator arising from the analysis, which describes the time-evolution of the eigenmodes.

In many situations this approach gives the whole story. However it relies on the perturbations remaining small enough in the near-term for the asymptotic behaviour to be manifested before the neglected nonlinearities become important.

For the case of flow through a straight pipe of infinite extent examined here, this assumption is not the case. For example dependence of transition on the disturbance amplitude -- a signature of nonlinearity --  was observed by Reynolds in pipe flow a century ago.
Furthermore, the discovery of full solutions other than the simple laminar flow \cite{Waleffe03, Wedin04} necessitates that the laminar solution is only locally stable.

In this case the system is non-normal or non-orthogonal -- indicating that the eigenmodes of the linearised system are not orthogonal with respect to the perturbation energy norm. Specifically, an operator $A$ with adjoint $A^*$ is non-normal if $AA^*\neq A^*A$. In such systems, `interference' or phase effects of the nearly-parallel eigenmodes may result in growth in the norm of interest that in the near-term may dominate the asymptotic behaviour to be expected by merely examining the eigenvalues.
This growth may be large enough for nonlinear effects to become significant in the near-term, rendering the purely classical eigenvalue analysis of secondary importance.
Such non-normality arises in forced, damped systems such as perturbations to a steady viscous shear flow.
This effect is well understood and a review of various ways of understanding non-normality in hydrodynamic systems is to be found in \cite{Schmid07}.

It is also understood that the production of perturbation or `turbulent' energy is completely described by the non-normal dynamics of the linear system, because the forcing arising from the nonlinearity of the perturbed system acts orthogonally to the velocity field and so does not contribute directly to the turbulent energy growth. This indicates that the special choice of the perturbation energy as a norm of interest allows us, in certain respects, to neglect the nonlinearity. This point is made explicit in this paper.

The present work considers the behaviour of arbitrary perturbations to a steady flow solution (laminar flow). The flow is assumed to have sufficiently high viscosity that \emph{small} perturbations eventually decay (so the eigenvalues of the linearised system are all stable), but sufficiently low viscosity that the same small perturbations may be capable of growing in the intermediate term (the linearised system is non-normal). This work may therefore be understood in the broader context of work applying ideas from the field of systems theory to fluids problems. Examples of this approach would include \cite{Jovanovic05, Bamieh01} and \cite{Bewley01}.

It is useful at this stage to consider the system as decomposed into two connected sub-systems, a linear system (obtained by linearisation) and a nonlinear part (the part that would normally be neglected). This view allows their separate analysis and the analysis of their mutual interaction. Furthermore, the linear part may be itself decomposed into various modes. For the pipe geometry in question, an orthogonal wavenumber decomposition is appropriate and further decomposition at each wavenumber according to turbulent energy production or dissipation is suggested.

Turbulence is a self-sustaining process, so the flow perturbations need to extract energy from the base flow, to counter dissipation.
This energy extraction is described in the sequel by a model obtained by linearisation about the laminar profile. A sector-bounding or supply rate analysis is used to pick out the dominant modes or structures in this amplification process. We might expect these structures to feature prominently in turbulent flows, along with the associated dissipative modes or structures excited via the nonlinearity.
The forcing required to drive the linear amplification process is provided throughout the flow volume by the nonlinearity. The nonlinearity is itself driven by the outcome of the linear amplification process, completing a self-supporting feedback loop.
Where the amplification is insufficient, the process will eventually decay.
Furthermore, the most amplifying modes can be calculated and the threshold at which amplification is insufficient to sustain turbulence can be quantified exactly.

In the actual system, as in this simple model, nonlinear effects interact with linear effects. However, given that the nonlinear effects do not create or destroy turbulent energy, but merely redistribute it, we can consider the linear subsystem as both producer and consumer of turbulent energy, and will look to decompose it accordingly. We might hypothesise that structures dominating in turbulent flow will correspond to those linear modes that strongly produce turbulent energy. The structures grow until their linear (locally determinable) growth is no longer sustainable, and the nonlinear forcing becomes significant enough to excite the more dissipative modes, dissipating energy. The nonlinearity meanwhile also provides forcing to the productive linear modes, allowing the process to begin again. The process as a whole depends on an energy source to initiate the growth (the shearing interaction with the base flow), which is manifested in the equations as a non-normality, and a nonlinearity to provide saturation and `clearing' so that the process can begin again.

The system as a whole is then understood as a feedback process between a conservative, redistributive nonlinearity, and energy-producing or energy-dissipating linear modes.
This self-reinforcing feedback process was partly described in \cite{Trefethen93}.

The presented analysis bounds the turbulent energy production of an 
arbitrary flow at a given Reynolds number. In this respect it is a nonlinear analysis. The paper goes on to propose a novel reduced-order approximation of the linear dynamics, that preserves bounds on this turbulent energy production, similar to balanced truncation. For example, see \cite{Moore81, Green} and, in the current context, \cite{Rowley04}. Should the nonlinearity be retained, it is anticipated that the nonlinear dynamics are also approximated. The proposed arrangement is illustrated in Fig.~\ref{reducedloop}.
In the sense that this approach is concerned with the contribution to turbulent energy production, the presented analysis is related to the stochastic forcing model of \cite{Farrell93}, the passivity analysis of \cite{Sharma06AIAA} and the $\mathcal{H}_{\infty}$-norm analysis of \cite{Jovanovic05}. 

\begin{figure}[p]
\psfrag{N}[][]{$\curly{N}$}
\psfrag{G}[][]{$G$}
\psfrag{N}[][]{$\curly{N}$}
\psfrag{Gred}[][]{${G}_{r}$}
\psfrag{v}[][]{$v$}\psfrag{n}[][]{$n$}
\psfrag{modelredn}[][]{sector balanced truncation}
\centering \includegraphics[width=12cm]{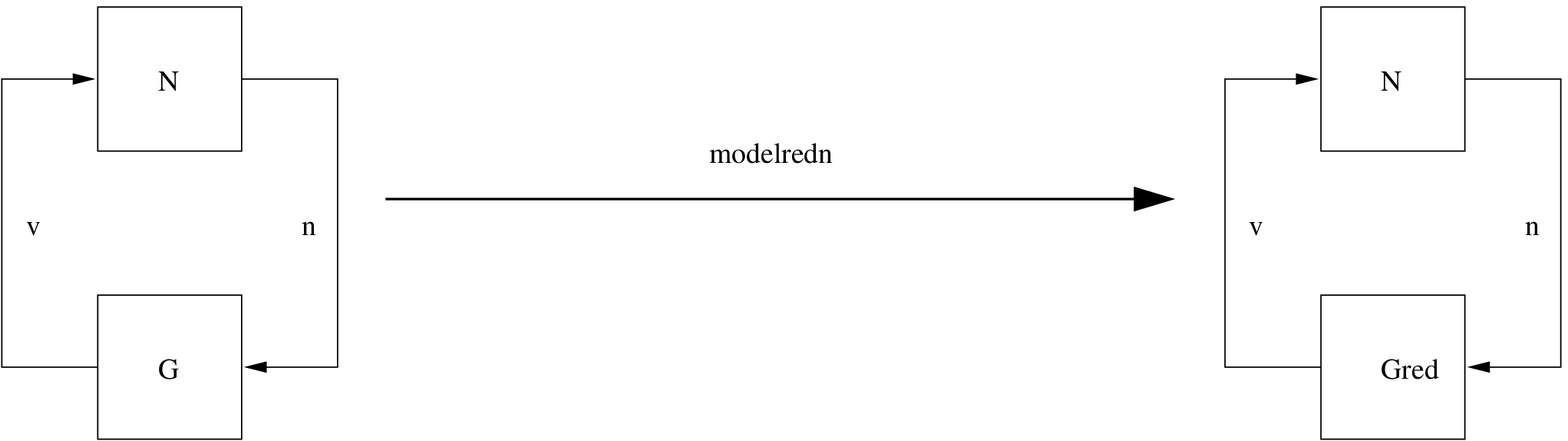}
\caption{The nonlinear model reduction process. $G$ is the full-order linear system, $G_r$ is the reduced-order linear system. $\curly{N}$ is the nonlinearity. The velocity field is $v$ and the forcing provided by the nonlinearity is $n$}
\label{reducedloop}
\end{figure}

The proposed scheme contrasts with the canonical balanced truncation scheme in two important ways. Firstly, the famous `twice-the-sum-of-the-tail' infinity norm error bound \cite{Green} is lost using the new procedure. Secondly, balanced truncation does not guarantee that sector bounds are preserved during the procedure.
Because the linear system to be reduced may exist in connection with a setor-bounded nonlinearity, it may be important for nonlinear stability results to preserve sector bounds during the reduction process. For a simple motivating example, consider the system shown in Fig.~\ref{reducedloop}. If both the nonlinearity $\curly{N}$ and the linear system $H$ in the left half of Fig.~\ref{reducedloop} are known to be strictly passive, the system arising from their interconnection must be stable \cite{Zames66-1}. Imagine balanced truncation is applied to $G$ yielding a reduced order model $G_r$. The canonical balanced truncation procedure is not guaranteed to preserve passivity. Therefore, the interconnection of $G_r$ with $\curly{N}$ shown in the right half of the figure may not itself be stable. Consequently, we might like to preserve sector bounds in any truncation procedure used in the presence of a sector bounded nonlinearity.

\section{Sector Bounded Systems}
We begin briefly reviewing the theory of sector bounded systems. A classic, comprehensive introduction is to be found in \cite{Willems72-0, Willems72-1}.

Consider a linear, time invariant (LTI) system $H$,
\begin{equation}
y = Hu
\end{equation}
with transfer function $H(s)$ that is analytic in the closed right half-plane.

The system $H$ is in sector $[a,b]$ if the transfer matrix $H(i\omega)$ obeys the following inequality
\begin{equation}
\Lambda (i\omega) = \textrm{Herm} [(H(i\omega)-aI)^{*}(H(i\omega)-bI)]\leq 0 \quad \forall \omega \in \mathbb{R}.
\label{matrixsectordefn}
\end{equation}
where we use the Hermitian part of a matrix, $\textrm{Herm}(X)=(X^*+X)/2$.

Equivalently, if there is a supply rate $w(y(t),u(t))$ 
and a non-negative storage function $S(z)$ such that
\begin{equation}
S(z(t)) = S(z(0)) + \int_{t_{0}}^{t} w(y(s),u(s))ds
\end{equation}
with the supply rate
\begin{equation}
w(y(t),u(t)) = (a+b)y^{*}(t)u(t) - abu^{*}(t)u(t) - y^{*}(t)y(t) \geq 0
\label{supplyrate}
\end{equation}
then $H \in [a,b]$, where the state of $H$ is $z$, and $S(0)=0$.

This can be interpreted graphically as the graph of the system lying within a conic region in the input-output space inhabited by $H$ (see Fig.~\ref{sectorbounded} for a single-input, single-output example, and see \cite{Zames66-1}).

\begin{figure}[p]
\psfrag{y}[][]{y}
\psfrag{u}[][]{u}
\psfrag{c+r}[][]{a=c+r}
\psfrag{c-r}[][]{b=c-r}
\psfrag{c}[][]{c}
\centering \includegraphics[width=12cm]{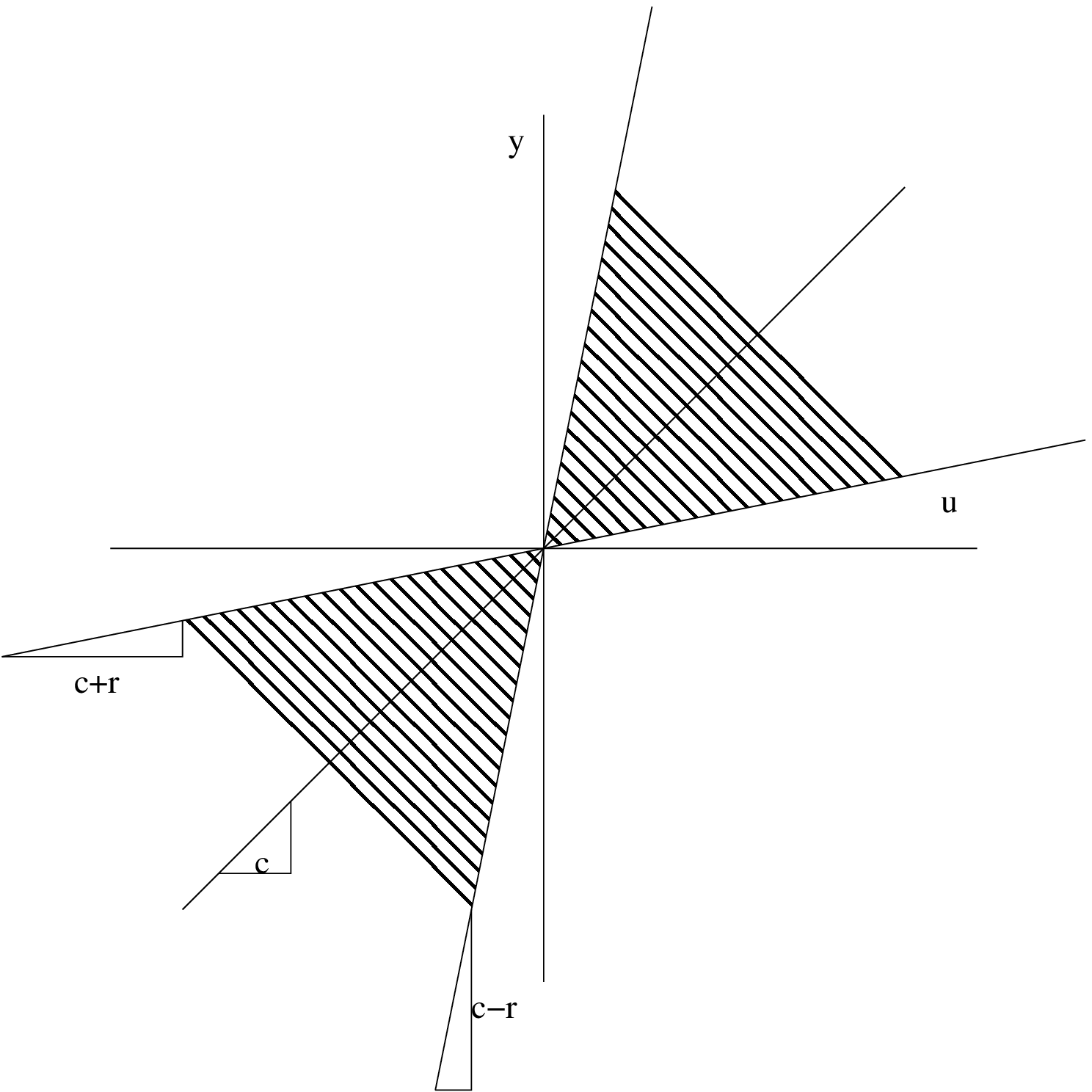}
\caption{For a system $y=Hu$,  $H\in [a,b]$, or $w>0$. The gradients of the two lines are $a$ and $b$. The shaded area represents the possible input-output pairs}
\label{sectorbounded}
\end{figure}

There are two special instances of broader interest, the sector $[0,\infty]$ which corresponds to a passive system (see Fig.~\ref{passive}), and the sector $[-\gamma,\gamma]$ which corresponds to $\Hinfty$-norm bounded systems (see Fig.~\ref{gammabounded}). Clearly the sector bounds $a$ and $b$ are not unique; for example, a system in sector $[-1,1]$ is also in sector $[-2,2]$.
\begin{figure}[p]
\psfrag{y}[][]{y}
\psfrag{u}[][]{u}
\centering \includegraphics[width=12cm]{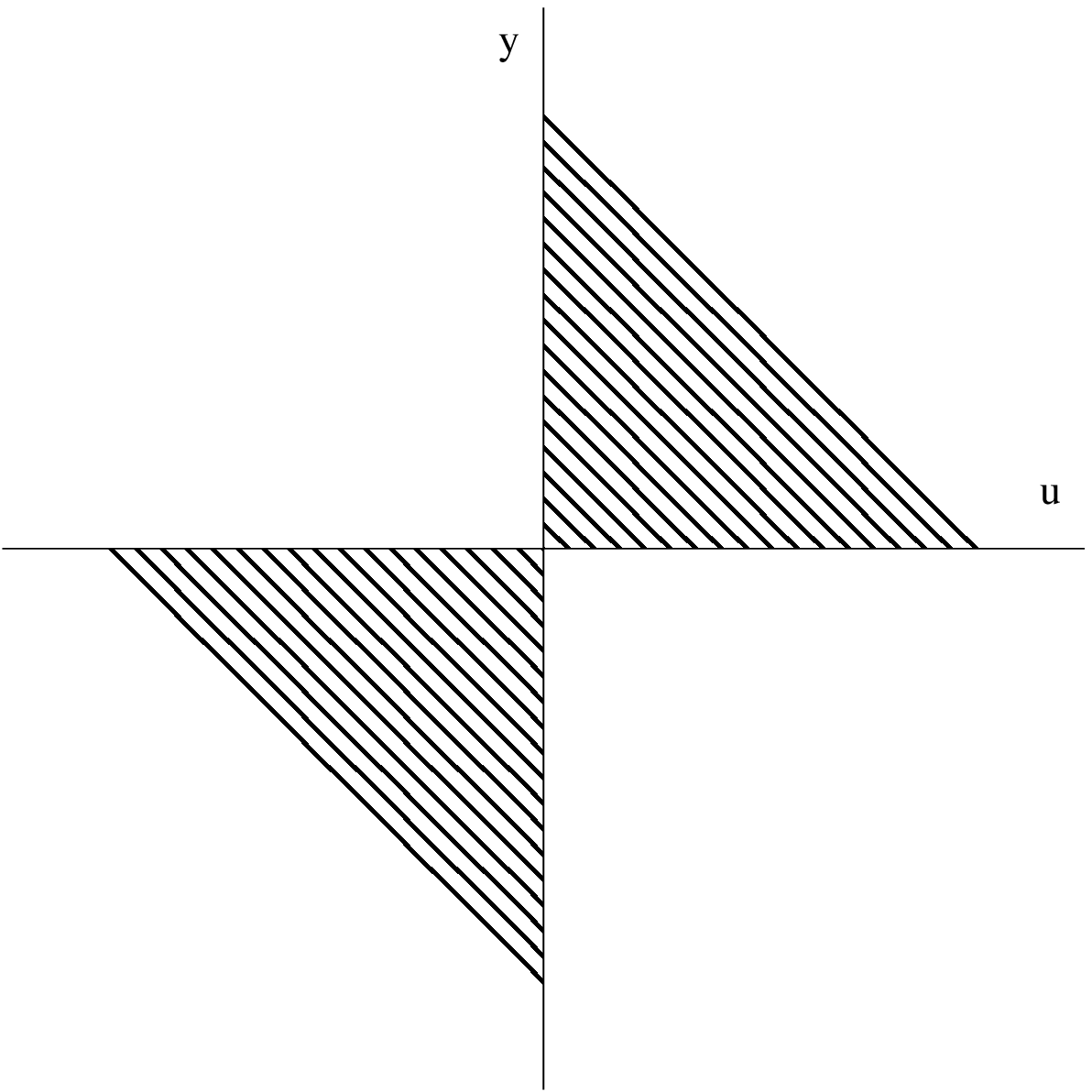}
\caption{Passive systems are a special case of Fig.~\ref{sectorbounded}, where $H\in [0,\infty]$, $H$ is passive, or $y^{*}u\geq 0$. A passive system never produces more output energy than that which is stored within the system plus that which is supplied to it.
}
\label{passive}
\end{figure}

\begin{figure}[p]
\psfrag{y}[][]{y}
\psfrag{u}[][]{u}
\psfrag{\\gamma}[][]{$\gamma$}
\centering \includegraphics[width=12cm]{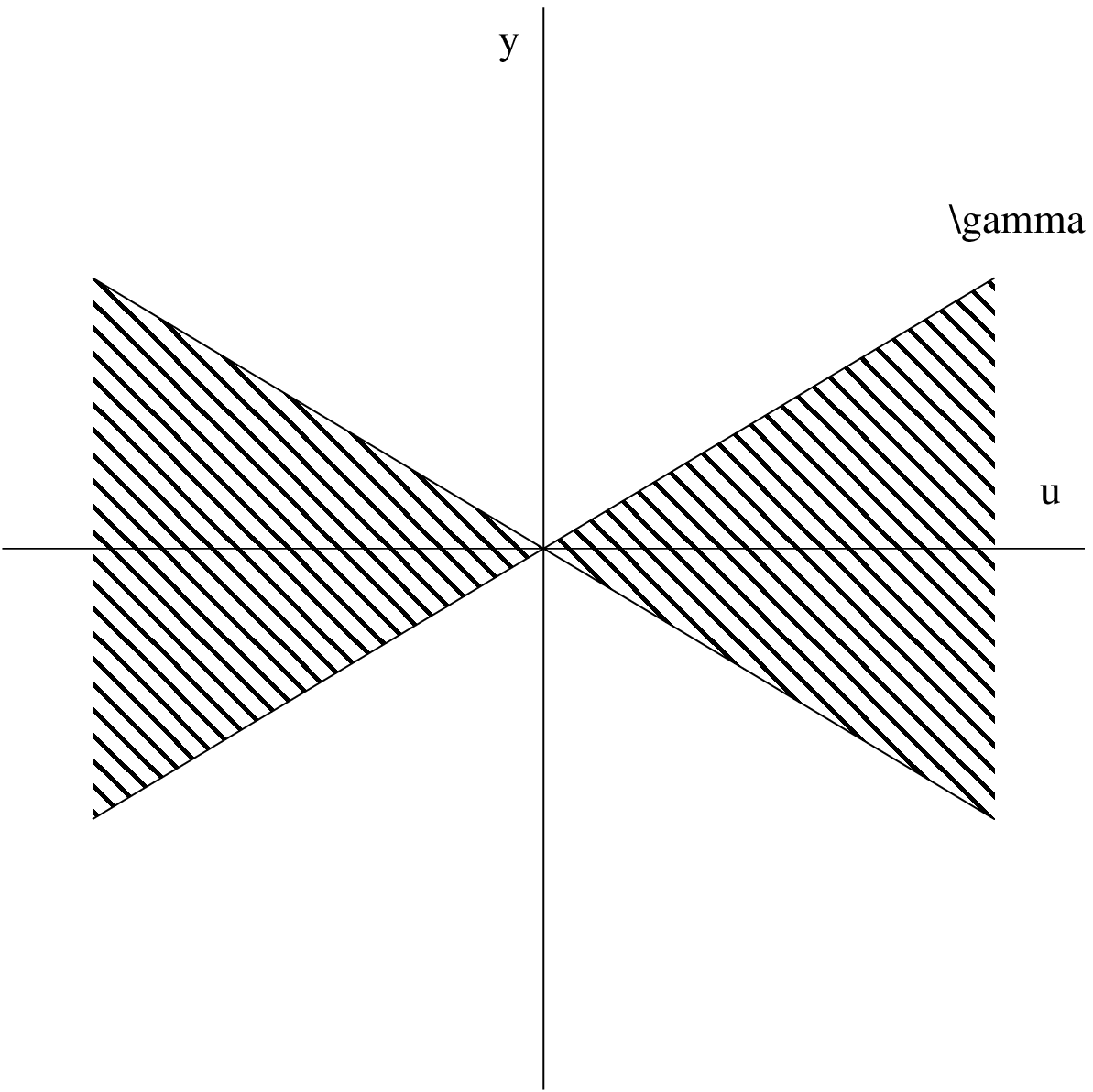}
\caption{\Hinf-norm bounded systems are a special case of Fig.~\ref{sectorbounded}, where $H\in [-\gamma,\gamma]$, or $\norm{y}{2}/\norm{u}{2}<\gamma$, or $\norm{H}{\infty}\leq\gamma$. Such a system's output energy is bounded by a ratio to the input energy.}
\label{gammabounded}
\end{figure}

We will make use of the following properties. If $H_{1}\in[a_1,b_1]$ and $H_{2}\in[a_2,b_2]$, with $b_1$ and $b_2$ positive, then $H_1+H_2\in[a_1+a_2,b_1+b_2]$.
If $H\in[a,b]$ and $k>0$, then $kH\in[ka,kb]$.

\section{Sector Bounding of the Linearised Navier-Stokes Equations and Bounding of the Turbulent Energy Production}
In this section we introduce the linearised, perturbed Navier-Stokes equations. The equations can be sector bounded and these bounds can be used to bound turbulent energy production. The perturbed Navier-Stokes equations can be interpreted as a coupled linear system and memoryless nonlinear system.

We begin by writing down the equations for three-dimensional incompressible fluid flow
evolving in time in a domain $\Omega \subset \bspace{R}^{3}$.
The state of the flow at an instant in time $t$ is fully described by a time-dependent velocity vector field $V(x,t)$ and a scalar pressure field $P(x,t)$ where $x$ is a point in $\Omega$ and $t$ is a point in time.

The flow is governed by the incompressible Navier-Stokes equations
at Reynolds number $Re$. The equations of motion are
\begin{equation}
\label{NS}
\begin{split}
\dot{V}(x,t) =& -V(x,t) \cdot \grad V(x,t) - \grad{P}(x,t) + \frac{1}{Re}
\grad^{2}{V}(x,t)
\end{split}
\end{equation}
\begin{equation}
\label{divfree}
\div V(x,t)=0.
\end{equation}
The flow also obeys prescribed boundary conditions
\begin{equation}
V(x,t) = V_{\partial}(x,t).
\end{equation}

We consider perturbations $v(x,t)$ around an assumed steady solution $\bar{v}(x)$. This gives the net velocity vector field
\begin{equation}
\label{perturb}
V = \bar{v}+v.
\end{equation}
The steady pressure $\bar{p}(x)$ is similarly perturbed  by $p(x,t)$.

Substitution into (\ref{NS}) gives the perturbation equations
\begin{equation}
\label{perturbedNS}
\begin{split}
\dot{v}(x,t) =& - \bar{v}(x,t) \cdot \grad v(x,t)
 - v(x,t) \cdot \grad \bar{v}(x,t)
 + n(x,t)
 - \\
 & \grad{{p}}(x,t) + \frac{1}{Re}
\grad^{2}{v}(x,t) \\
n(x,t) =& -v(x,t) \cdot \grad v(x,t),\\
\div{v}(x,t) =& 0.
\end{split}
\end{equation}
A substitution has been made for the nonlinear part, giving coupled linear and nonlinear equations.
We do not make the assumption of small perturbations.

The equations (\ref{perturbedNS}) include pressure. The pressure term can be eliminated along with the divergence equation by projecting the equations onto the space of divergence-free functions. Defining $\Pi$ via $\Pi(\grad{p})=0$,  $\Pi(v)=v$ gives
\begin{equation}
\label{projNS}
\begin{split}
\dot{v}(x,t) =& - \Pi \left( \bar{v}(x,t) \cdot \grad v(x,t) - v(x,t) \cdot \grad \bar{v}(x,t)  + \frac{1}{Re}
\grad^{2}{v}(x,t) + n(x,t) \right)\\
n(x,t) =& \curly{N}(v)= - v(x,t) \cdot \grad v(x,t).
\end{split}
\end{equation}

\label{finitedim}
The perturbation equations (\ref{projNS}) are then described as the feedback interconnection between a linear part and the nonlinear part.

We achieve this by writing the system equations (\ref{projNS}) in operator form as
\begin{equation}
\begin{split}
\label{opNS}
\dot{{v}}(x,t) &= Av(x,t) + n(x,t)  \\
n(x,t)          &= \curly{N}(v(x,t))
\end{split}
\end{equation}
and for notational convenience write this as
\begin{equation}
\begin{split}
v =& Gn\\
n =& \curly{N}(v).
\end{split}
\end{equation}

Define an inner product on $\Omega\times[t_{0},t]$, 
\begin{equation}
\inprod{\alpha}{\beta} = \int_{t_{0}}^{t}\int_{x\in\Omega}\beta(x,s)^{*}\alpha(x,s)dx ds.
\label{fullinnerprod}
\end{equation}

Using this definition, consider the total turbulent energy, $E(t) = \frac{1}{2} \int_{x\in\Omega} v(x,t)^2 dx$,
\begin{equation}
{E}(t) = \inprod{v}{\curly{N}(v)} + \inprod{Gn}{n}.
\end{equation}
The fields in question are real-valued.
It is easily shown \cite{Sharma06AIAA} that the net inflow of perturbation energy into the spatial domain $\Omega$ is 
\begin{equation}
{E}_{in} = \inprod{v}{\curly{N}(v)}.
\end{equation}
To see this, evaluate
$ \inprod{v}{n}$
or equivalently,
\begin{equation}
\label{nonlinreq}
\inprod{v}{\curly{N}v}=
-\int_{t_{0}}^{T}\int_{x\in \Omega} v(x,t)\cdot \left( v(x,t) \cdot \grad v(x,t)\right) ~dx~dt.
\end{equation}

Applying the divergence theorem, the inner
integral is equivalent to an integral over the boundary,
\begin{equation}
\label{nonlinbc}
-\int_{x\in \partial\Omega} (v(x,t)\cdot v(x,t)) v(x,t) \cdot \hat{\xi} ~dx
\end{equation}
where $\hat{\xi}$ is the outward-facing
unit vector perpendicular to the boundary
$\partial\Omega$. Physically interpreted, (\ref{nonlinbc})
quantifies the net flux of disturbance energy into the domain
through the boundary per unit time.

So, the perturbation energy is
\begin{equation}
{E}(t) = -\inprod{Gn}{n} + {E}_{in}.
\label{pertenergy}
\end{equation}

Assume the linearised system ($G$) is in sector $[a,b]$. We can find a suitable $a$ and $b$ later. Since $G$ is in sector $[a,b]$, the associated supply rate is non-negative.
To evaluate the supply rate, we also need the inner product on $\Omega$ only, which we define via
\begin{equation}
\inprod{\alpha}{\beta}_{\Omega} = \int_{x\in\Omega}\beta^{*}(x)\alpha(x) dx.
\label{innerprod}
\end{equation}
This is the time derivative of Eq.~(\ref{fullinnerprod}).
The supply rate is then
\begin{equation}
w(t) = (a+b)\inprod{v(t)}{n(t)}_{\Omega} - ab \inprod{n(t)}{n(t)}_{\Omega} - \inprod{v(t)}{v(t)}_{\Omega} \geq 0
\end{equation}
which, using the time derivative of Eq.~(\ref{pertenergy}), gives
\begin{equation}
(a+b)(\dot{E}_{in} - \dot{E})-ab\inprod{n(t)}{n(t)}_{\Omega} - \inprod{v(t)}{v(t)}_{\Omega} \geq 0
\end{equation}
where $\dot{x}$ denotes the time derivative of $x$.
To allow for the very viscous small-scale modes we expect, take the limit of large $b$, so that $G\in [a,\infty]$.
Then we obtain a bound on the rate of perturbation (turbulent) energy creation,
\begin{equation}
\dot{E} \leq -a\norm{n(t)}{\Omega}^{2} + \dot{E}_{in}.
\end{equation}
We would seek an approximatation to $G$, retaining modes that contribute maximally to this turbulent energy creation.

The turbulent energy production is bounded from above by the energy inflow, the sector bound $a$, and the norm of the nonlinear forcing function of $v$, $n=-v\cdot \grad v$.

Note that if $a<0$, growth of the turbulent energy is possible, even from small $v$, via the nonlinear feedback mechanism.
If however $a>0$, and there is no net inflow, then the turbulence is guaranteed to decay.

Conservative estimates for the sector bounds $a$ and $b$ may easily calculated from the linearised system by taking limits of Eq.~(\ref{matrixsectordefn}), giving
\begin{equation}
b \geq \frac{1}{2}(G(j\omega)^{*} + G(j\omega) ) \geq a.
\label{abbound1}
\end{equation}
Bounds may also be calculated using the infinity norm (see Fig.~\ref{gammabounded}),
\begin{equation}
\norm{G}{\infty} \leq b, \quad \norm{G}{\infty} \leq -a.
\label{abbound2}
\end{equation}

If $a<0$, the linear part of the system may produce turbulent energy via the non-normality in the perturbed equations, which may be interpreted graphically as in Fig.~\ref{sectorainf}.
\begin{figure}[p]
\psfrag{y}[][]{y}
\psfrag{u}[][]{u}
\centering \includegraphics[width=12cm]{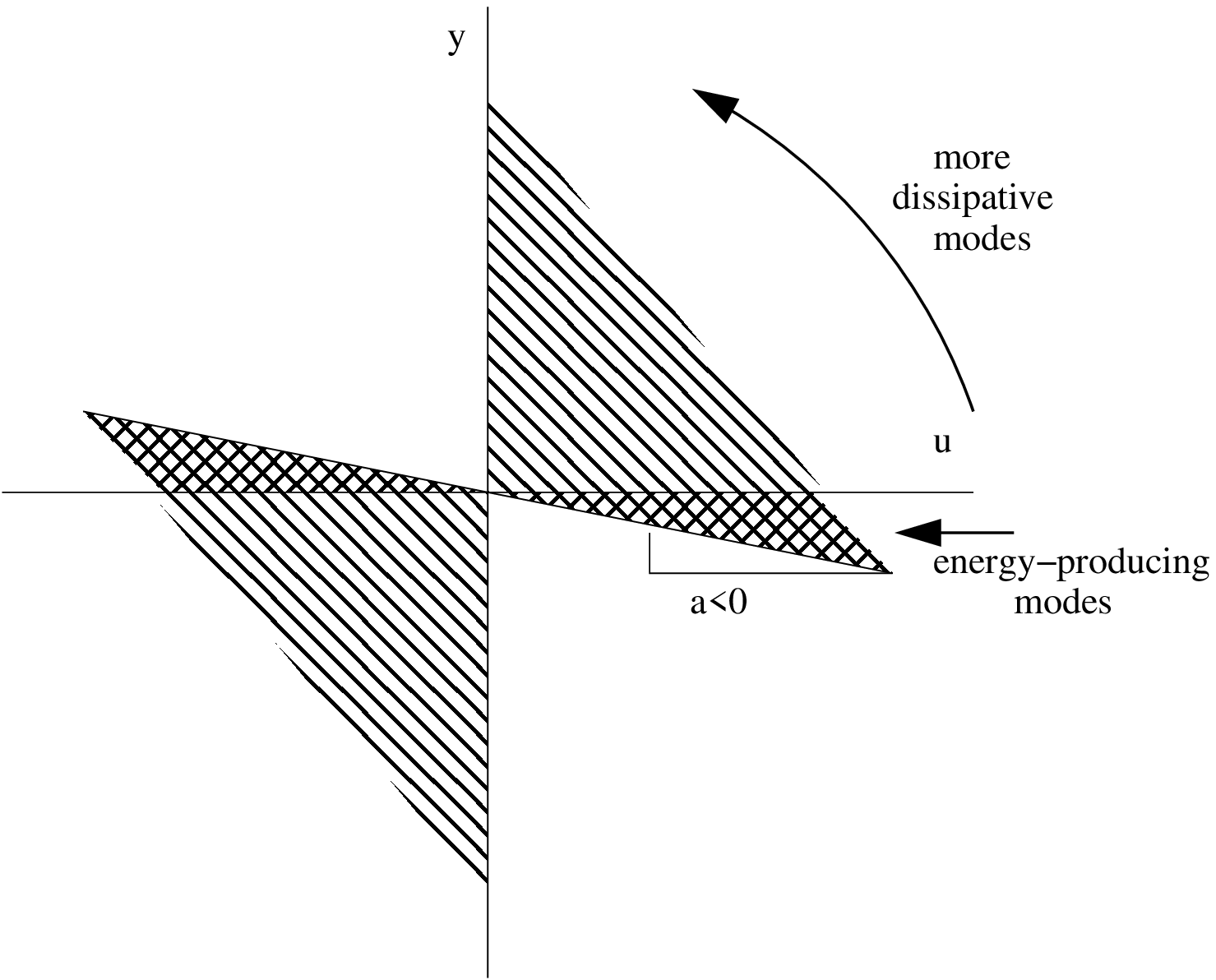}
\caption{$H\in [a,\infty]$. The energy $\norm{y}{2}$ may increase transiently, even though $H$ is stable, but the growth is bounded. In the presence of a conservative nonlinear feedback, the system may not return to rest, but the energy production rate is bounded.}
\label{sectorainf}
\end{figure}

\section{Model Reduction by Sector-Balanced Truncation}
\label{ModelReduction}
The novel reduction procedure presented here is designed to preserve sector bounds and therefore the nonlinear turbulent energy production bounds described in the previous section.
The procedure is in this sense appropriate for application to finding finite-dimensional approximation of Navier-Stokes systems.

The previous section showed how sector bounds can be used to bound the turbulent energy production. In this section we reduce the order of the linearised system responsible for producing the turbulent energy, while leaving the nonlinear part alone.

The proposed scheme contrasts with the canonical balanced truncation scheme in two important ways. Firstly, the famous `twice-the-sum-of-the-tail' infinity norm error bound \cite{Green} is lost. Secondly, balanced truncation does not guarantee that sector bounds are preserved during the procedure.
Because the linear system to be reduced may exist in connection with a sector-bounded nonlinearity, it may be important for nonlinear stability results to preserve sector bounds during the reduction process. For a simple motivating example, consider the system shown in Fig.~\ref{reducedloop}. If both the nonlinearity $\curly{N}$ and the linear system $H$ in the left half of Fig.~\ref{reducedloop} are known to be strictly passive, the system arising from their interconnection must be stable \cite{Zames66-1}. Imagine balanced truncation is applied to $G$ yielding a reduced order model $G_r$. The canonical balanced truncation procedure is not guaranteed to preserve passivity. Therefore, the interconnection of $G_r$ with $\curly{N}$ shown in the right half of the figure may not itself be stable. Consequently, we might like to preserve sector bounds in any truncation procedure used in the presence of a sector bounded nonlinearity.

For an introductory survey of model reduction by various balanced truncation variants, the reader is directed to \cite{Gugercin04}.

To summarise, the procedure is as follows:
\begin{enumerate}
\item Find the linearised system $v=Gn$ and the associated inner product $\inprod{\cdot}{\cdot}_{\Omega}$
\item Establish that $G\in [a,b]$
\item Find an approximation to $G$, $\tilde{G}$, that is also in the same sector.
\end{enumerate}

In the spirit of balanced truncation, we propose a similar state-space truncation that preserves the sector bounding in the next section. We will also see that canonical balanced truncation is recovered in a special limting case.

Suppose the system $v=Hf$ is square, stable, finite-dimensional, linear time-invariant and $H\in [a,b]$, with minimal state-space realisation
\begin{eqnarray}
\dot{x}(t) = Ax(t)+ Bf(t)\\
v(t) = Cx(t)+Df(t)\\
E(t)=v^*(t)v(t)
\end{eqnarray}
with $x \in \bspace{C}^m$, $v\in\bspace{C}^l$, $f\in \bspace{C}^l$ and complex matrices $A,~B,~C,~D$ dimensioned compatibly.
This can be written as
\begin{equation}
H =  \mat{c|c}{A & B \\ 
\hline C  & D}.
\label{ssrealisation}
\end{equation}

If the similarity transformation $T$ is a suitably dimensioned square nonsingular matrix, the following is also a realisation of $H$,
\begin{eqnarray}
\dot{z}(t) = TAT^{-1}z(t)+ TBf(t)\\
\label{transformed1}
v(t) = CT^{-1}z(t)+Df(t)\\
z(t)=Tx(t).
\label{transformed}
\end{eqnarray}

$H$ is strictly in sector $[a,b]$ if and only if the following conditions are met \cite{Shim97},
\begin{enumerate}
\item{There exist real, symmetric solutions $X>0$ and $Y>0$ of the following two Algebraic Riccati equations (AREs)
\begin{equation}
A'X + XA - (XB-C'(\alf I - D))P^{-1}(XB-C'(\alf I - D))' + C'C=0
\label{RiccX}
\end{equation}
\begin{equation}
AY + YA' - (YC'-B(\alf I - D)')P^{-1}(YC'-B(\alf I - D)')' + BB'=0
\label{RiccY}
\end{equation}
}
\item{$P<0$}
\item{The solutions $X$ and $Y$ are stabilising, that is,
\begin{equation}
A - BP^{-1}(\alf I-D)'C - BP^{-1}B'X<0
\label{StabilisingX}
\end{equation}
\begin{equation}
A' - C'P^{-1}(\alf I-D)B' - C'P^{-1}C'Y<0
\label{StabilisingY}
\end{equation}
}
\end{enumerate}
where $P = D'D - \alf(D+D') + abI$.

Since existence of solutions to the AREs (\ref{RiccX}) and (\ref{RiccY}) determine that the system obeys the sector bound $H\in[a,b]$ and thus has a supply rate in Eq.~(\ref{supplyrate}), it is appropriate to determine a balancing and truncation scheme whereby the solutions $X$ and $Y$ are equal and diagonal. The states spanned by the eigenvectors of the balanced $X$ and $Y$ corresponding to their smallest eigenvalues are then truncated. The truncated states are then those that impact least on the supply rate.
Thus we seek a transformation $T$ so that the solutions to the two AREs for the realisation in Eqs.~(\ref{transformed1}-\ref{transformed}) are equal and diagonal with ordered elements.

A transformation that achieves this is $T=\Sigma^{\frac{1}{2}}U'R^{-1}$ where $Y=RR'$ is a Cholesky factorisation of $Y$ and $R'XR = U\Sigma^{2}U'$ is a singular value decomposition of $R'XR$. A proof is presented in Appendix~\ref{apx_balancing}.

The balanced system may then be partitioned,
\begin{equation}
H = \mat{c|c}{\tilde{A} & \tilde{B} \\ \hline \tilde{C} & D} =
\mat{cc|c}{A_{11} & A_{12} & B_{1} \\ 
A_{21} & A_{22} & B_{2} \\
\hline C_{1} & C_{2} & D}.
\label{partitioned}
\end{equation}
The partitioned system is then truncated 
\begin{equation}
H_{r} =  \mat{c|c}{A_{11} & B_{1} \\ 
\hline C_{1}  & D}.
\label{truncated}
\end{equation}
It is also shown in Appendix \ref{apx_balancing}  that the truncated realisation in Eq.~(\ref{truncated}) is itself sector balanced, stable, and $H_{r}$ is also strictly in sector $[a,b]$.

Since the reduced system $H_{r}$ shares the same input and output dimensions as the full system $H$, the same nonlinearity is used in the approximated nonlinear model.

Computationally, it is advisable to relax the sector bounds $a$ and $b$ where they have been found by Eqs (\ref{abbound1}) and (\ref{abbound2}) to improve the conditioning of the AREs.

It is interesting to note that canonical balanced truncation is a special case where the sector under consideration approaches $[-\infty,\infty]$, in which case the AREs reduce to Lyapunov equations and their solutions to the controllability and observability gramians. In this case the method coincides with balanced truncation, and bounds on the turbulent energy production are lost.

\section{Application to Hagen-Poiseuille Flow}
As an illustrative example, we use flow through an infinitely long, straight pipe. The pipe is of particular relevance to this type of analysis because the eigenmodes of its linearised operator are always stable. Consequently we must turn to an analysis of system non-normality to explain turbulent energy production.

The difficulty of finding solutions to AREs involving partial differential operators unfortunately imposes that we begin with a finite-dimensional model before finding the reduced-order model.
As such, the flow is Fourier transformed in the streamwise (axial) and azimuthal directions. The highly accurate linear code presented in \cite{Meseguer03-1} is used. The flow is discretised with a discretisation resolution $N=100$.
The Reynolds number used is defined as
\begin{equation}
Re=\frac{\Pi r^3\rho}{4\mu^2}
\end{equation}
where $-\Pi$ is the streamwise pressure gradient, $r$ the radius, $\rho$ the density and $\mu$ the viscosity. A state-space system is formed as in Eq.~(\ref{ssrealisation}). For the application under consideration, the state-space system ouput is the velocity field and the input is the volume forcing provided by the nonlinearity. As such the system is square and $f(t)^*v(t)$ has units of power and provides us with the inner product of Eq.~(\ref{innerprod}). For this and most other fluids examples, $D=0$.

The flow in the pipe in question has $Re=2000$. Turbulent behaviour has been observed in pipes at this Reynolds number. The perturbation system studied in this section is axially symmetric ($k=0$) and has azimuthal wavenumber $n=1$. The sector bound used is fairly tight at $[-2830,~14200]$. The infinity norm of the system is $||G||_{\infty}<4800$.

By construction, since the reduced system remains in the same sector as the full linearised model, the supply rate inequality is preserved, as is the nonlinear bound on energy production.

To estimate how many modes should be kept in a reduced-order model, the 
singular values of the balanced ARE solutions are examined in Fig.~\ref{singularvalsk0n1}.
The figure demonstrates that of the $2N=160$ states, the contribution to the supply rate of the fourth mode is approximately two decades less than that of the leading mode.
For comparison the Hankel singular values found during canonical balanced truncation are also shown.

The first and third modes are given in Figs.~\ref{mode1k0n1} and \ref{mode3k0n1}. The tenth mode is shown in Fig.~\ref{mode10k0n1}.
The value of $\sigma$ for the tenth mode suggests that this and similar modes may be involved in the transfer of energy between the modes rather than energy production. Comparison between all modes shows that the most active modes tend to be simpler, because dissipation is more significant for the more complex modes where significant mixing is evident.

\begin{figure}[p]
\centering \includegraphics[width=12cm]{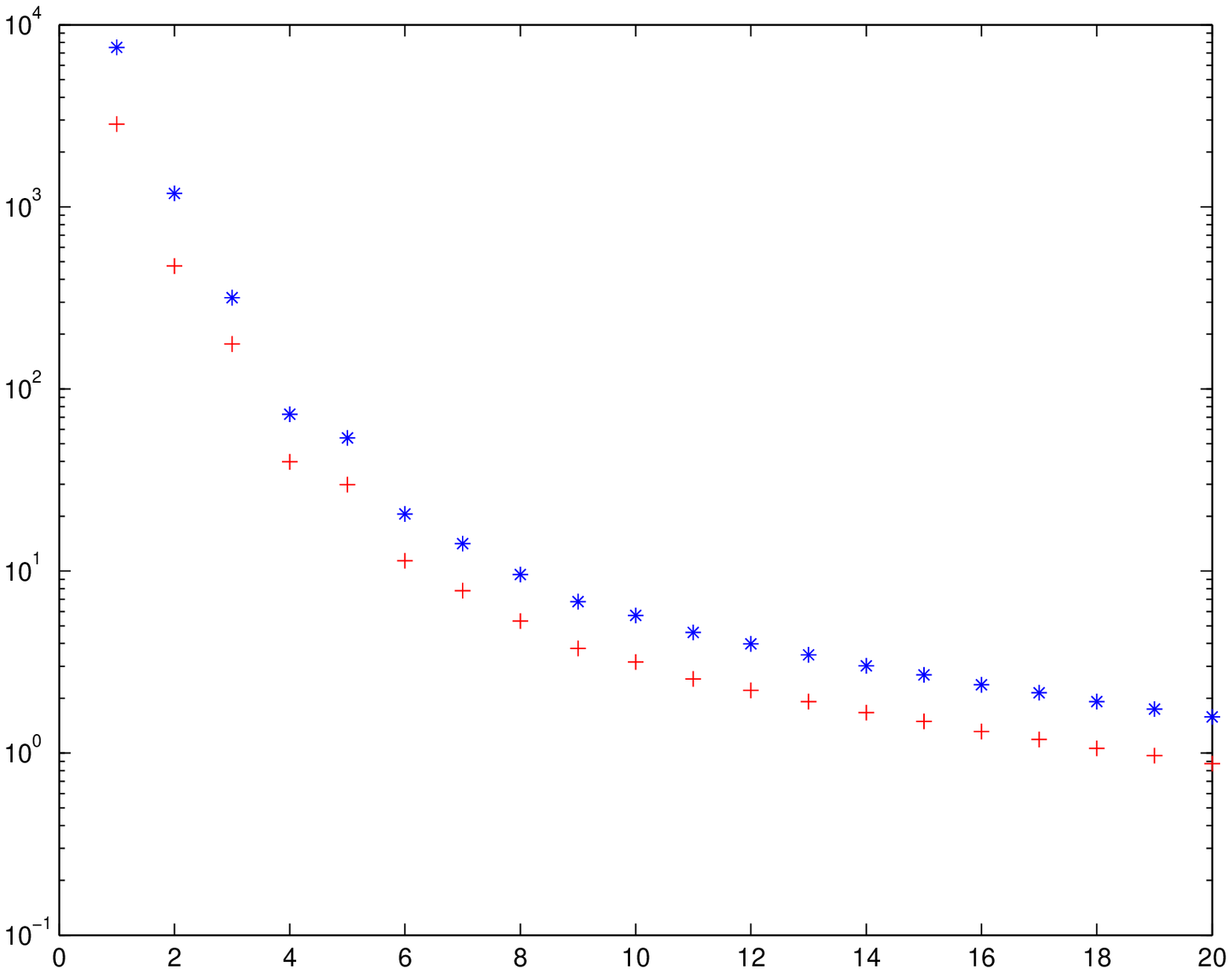}
\caption{The first twenty singular values of the balanced ARE solutions for sector balanced truncation for sector $[-2830,~14200]$ (\textcolor{blue}{*}) and the Hankel singular values (\textcolor{red}{+}).}
\label{singularvalsk0n1}
\end{figure}

\begin{figure}[p]
\centering \includegraphics[width=12cm]{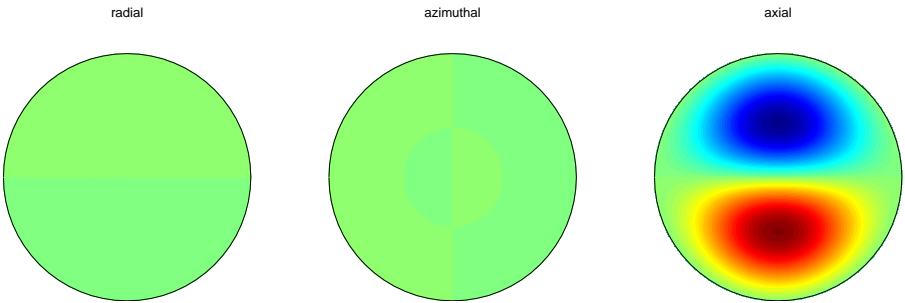}
\caption{$Re=2000$, $k=0$, $n=1$. The velocity field of the most important mode, with $\sigma = 7518$}
\label{mode1k0n1}
\end{figure}
\begin{figure}[p]
\centering \includegraphics[width=12cm]{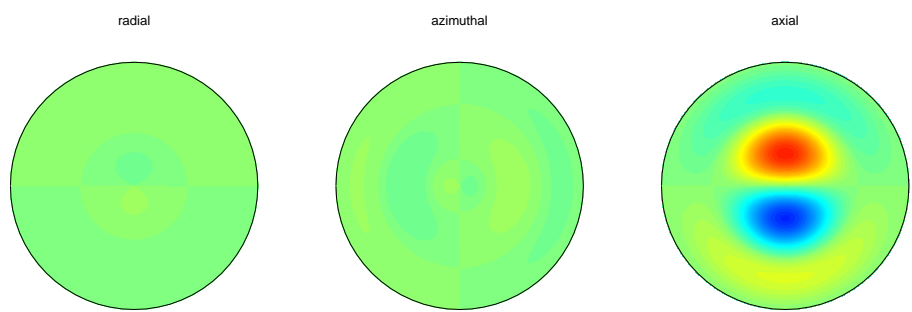}
\caption{$Re=2000$, $k=0$, $n=1$. The third most important mode, with $\sigma = 317$}
\label{mode3k0n1}
\end{figure}
\begin{figure}[p]
\centering \includegraphics[width=12cm]{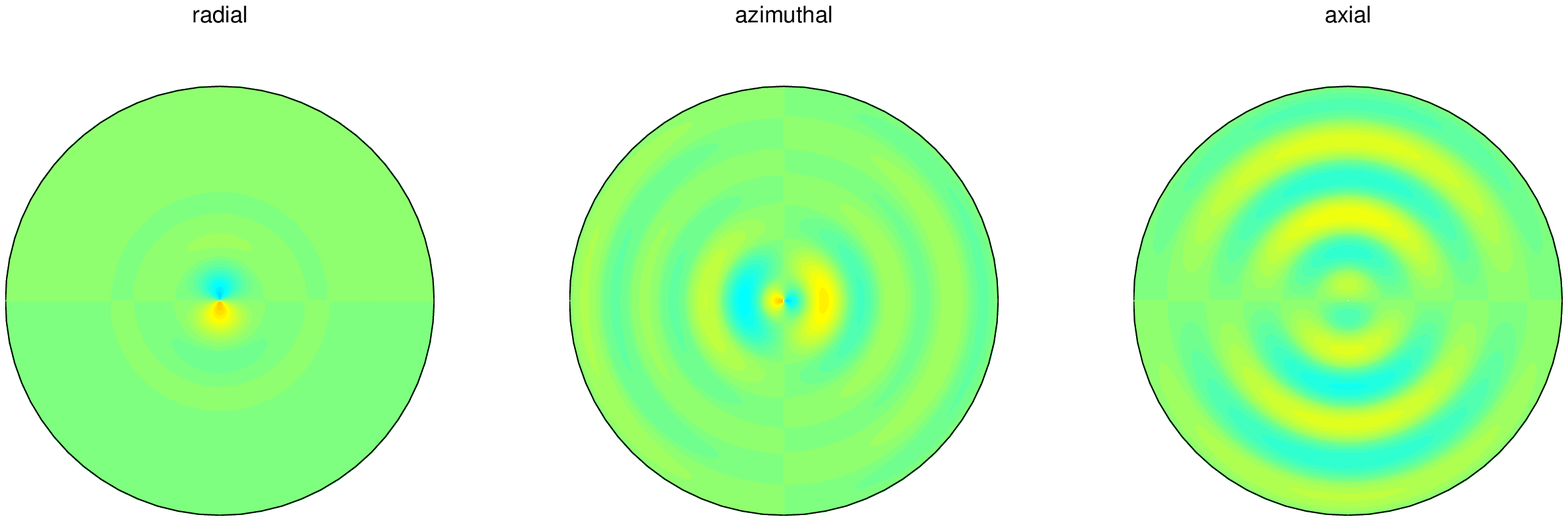}
\caption{$Re=2000$, $k=0$, $n=1$. The tenth mode, with $\sigma = 6$}
\label{mode10k0n1}
\end{figure}

For the purposes of comparison with canonical balanced truncation, a very aggressive model reduction keeping just a single state was calculated using sector-balanced truncation and canonical balanced truncation. The infinity norms for the reduced models were $5710$ for balanced truncation and $4836$ for sector-balanced truncation. The sector-balanced truncation happens to outperform in this case on a traditional infinity-norm measure, although of course the method gives no \emph{a priori} bounds on this norm, unlike canonical balanced truncation.
More pertinent to our purpose is the effect of the reduction procedure on the supply rate. This was evaluated directly by examining the eigenvalues of Eq.~(\ref{matrixsectordefn}) for the original and reduced models using the sector bound $[-2830,~14200]$ over a range of frequencies. This comparison in shown in Fig.~\ref{supplyrateerror}. In the figure, the lower panel shows the fractional error of the reduced models compared with the original, defined by $\left( \lambda_i (\hat{\Lambda}(i\omega)) - \lambda_i (\Lambda(i\omega)) \right) / \lambda_i (\Lambda(i\omega))$. The matching of the supply rate using the sector bounded method is evidently superior. In fact, the balanced truncation method breaks the sector bounds, significantly violating the inequality of Eq.~(\ref{matrixsectordefn}). However, it should be noted that with a higher-order reduced model, both methods give very good results.
\begin{figure}[p]
\centering \includegraphics[width=12cm]{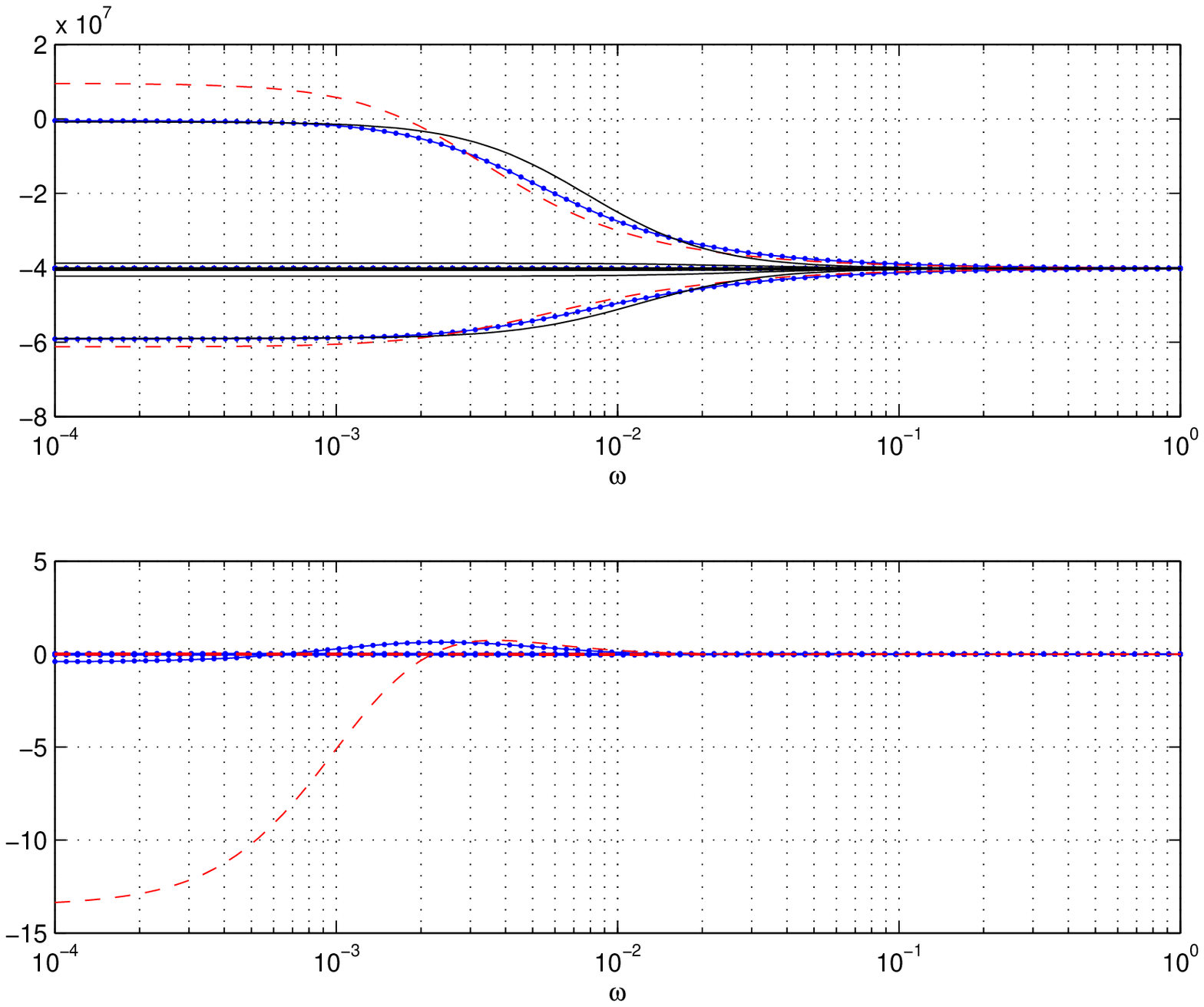}
\caption{$Re=2000$, $k=0$, $n=1$. The upper panel shows the eigenvalues of  Eq.~(\ref{matrixsectordefn}) for the original model (-), the sector-balanced reduced model (\textcolor{blue}{-$\cdot$-}) and for balanced truncation (\textcolor{red}{-~-}). The balanced truncation method violates the inequality of Eq.~(\ref{matrixsectordefn}) showing that the reduced model is no longer in the sector chosen. The lower panel shows the fractional error of the reduced models compared with the original.}
\label{supplyrateerror}
\end{figure}

\section{Conclusions}
A Riccati-based method for model reduction based on balancing with regard to sector bounding has been presented. The supply rate of the reduced model approximates the supply rate of the original model. When applied to the linearised Navier-Stokes equations and considering the passivity of the nonlinear term neglected in the linearisation, this sector bounding results in a bound on the turbulent energy production of the full nonlinear Navier-Stokes equations at a given Reynolds number. These bounds may be calculated from the linearised system.
Because the scheme preserves the sector bounds of the linearised system it also preserves bounds on the turbulent energy production of the whole system, including the nonlinearity.

The scheme has been applied to linearised perturbations about pipe flow for a typical case of $Re=2000$. The perturbation system is axially symmetrix and has azimuthal wavenumber of $1$. Representative modes coresponding to energy production and transfer have been identified and presented.
A comparison shows that sector-balanced truncation approximates the supply rate at least as well as, or better than, canonical balanced truncation,


\section*{Acknowledgments}
  The author wishes to thank the UK EPSRC for their support.
\bibliography{refs} 

\begin{thebibliography}{}

\bibitem[Bamieh and Dahleh, 2001]{Bamieh01}
Bamieh, B. and Dahleh, M. (2001).
\newblock Energy amplification in channel flows with stochastic excitation.
\newblock {\em Phys. Fluids}, 13(11).

\bibitem[Bewley, 2001]{Bewley01}
Bewley, T. (2001).
\newblock Flow control: New challenges for a new renaissance.
\newblock {\em Progress in Aerospace Sciences}, 37:21--58.

\bibitem[Farrell and Ioannou, 1993]{Farrell93}
Farrell, B. and Ioannou, J. (1993).
\newblock Stochastic forcing of the linearized {N}avier-{S}tokes equations.
\newblock {\em Phys. Fluids}, 5(11):2600--2609.

\bibitem[Green and Limebeer, 1995]{Green}
Green, W.~J. and Limebeer, D.~J.~N. (1995).
\newblock {\em Linear Robust Control}.
\newblock Prentice Hall, New Jersey.

\bibitem[Gugercin and Antoulas, 2004]{Gugercin04}
Gugercin, S. and Antoulas, A.~C. (2004).
\newblock A survey of model reduction by balanced truncation and some new
  results.
\newblock {\em Int. Journal of Control}, 77(8):748--766.

\bibitem[Jovanovi\'{c} and Bamieh, 2005]{Jovanovic05}
Jovanovi\'{c}, M.~R. and Bamieh, B. (2005).
\newblock Componentwise energy amplification in channel flows.
\newblock {\em J. Fluid Mech.}, 534.

\bibitem[Meseguer and Trefethen, 2003]{Meseguer03-1}
Meseguer, A. and Trefethen, L.~N. (2003).
\newblock Linearized pipe flow to {R}eynolds number $10^7$.
\newblock {\em Journal of Computational Physics}, (186):178--197.

\bibitem[Moore, 1981]{Moore81}
Moore, B.~C. (1981).
\newblock Principal component analysis in linear systems: Controllability,
  observability, and model reduction.
\newblock {\em IEEE Trans. Automat. Contr}, 26(1):17--32.

\bibitem[Rowley, 2004]{Rowley04}
Rowley, C.~W. (2004).
\newblock Model reduction for fluids, using balanced proper orthogonal
  decomposition.
\newblock {\em Int. J. on Bifurcation and Chaos}, 15(3):997--1013.

\bibitem[Schmid, 2007]{Schmid07}
Schmid, P.~J. (2007).
\newblock Nonmodal stability theory.
\newblock {\em Ann. Rev. Fluid Mech.}, (39):129--162.

\bibitem[Sharma et~al., 2006]{Sharma06AIAA}
Sharma, A.~S., McKeon, B.~J., Morrison, J.~F., and Limebeer, D.~J.~N. (2006).
\newblock Stabilising control laws for the incompressible {N}avier-{S}tokes
  equations using sector stability theory.
\newblock {\em 3rd AIAA Flow Control Conference}.

\bibitem[Shim, 1997]{Shim97}
Shim, D.-S. (1997).
\newblock Synthesis of sector-bounded {LTI} systems.
\newblock Proceedings of the 36th Conference on Decision and Control, San
  Diego, California, USA.

\bibitem[Trefethen et~al., 1993]{Trefethen93}
Trefethen, L.~N., Trefethen, A.~E., Reddy, S., and Driscoll, T.~A. (1993).
\newblock Hydrodynamic stability without eigenvalues.
\newblock {\em Science}, 261(5121).

\bibitem[Waleffe, 2003]{Waleffe03}
Waleffe, F. (2003).
\newblock Homotopy of exact coherent structures in plane shear flows.
\newblock {\em Phys. Fluids}, 15(6).

\bibitem[Wedin and Kerswell, 2004]{Wedin04}
Wedin, H. and Kerswell, R.~R. (2004).
\newblock Exact coherent structures in pipe flow: Travelling wave solutions.
\newblock {\em J. Fluid Mech.}, 508:333--371.

\bibitem[Willems, 1972a]{Willems72-0}
Willems, J.~C. (1972a).
\newblock Dissipative dynamical systems part 1: General theory.
\newblock {\em Arch. Rational Mech. Anal.}, 45:321--351.

\bibitem[Willems, 1972b]{Willems72-1}
Willems, J.~C. (1972b).
\newblock Dissipative dynamical systems part 2: Linear systems with quadratic
  supply rates.
\newblock {\em Arch. Rational Mech. Anal.}, 45:352--393.

\bibitem[Zames, 1966]{Zames66-1}
Zames, G. (1966).
\newblock On the input-output stability of time-varying nonlinear feedback
  systems - {P}art {I}: Conditions derived using concepts of loop gain,
  conicity and positivity.
\newblock {\em IEEE Trans. on Automatic Control}, AC-11(2):228--238.

\end{thebibliography}
\bibliographystyle{ijbc}

\appendix
\section{A sector balancing transformation}
\label{apx_balancing}
This appendix concerns the stable, finite-dimensional, linear time-invariant system $v=Hf$ defined in Eq. \ref{ssrealisation}.

\begin{lem}
There exists a similarity transformation $T$ so that the solutions to the two sector bounding algebraic Riccati equations (AREs) for the realisation in Eqs.~(\ref{transformed1}-\ref{transformed}) are equal and diagonal with ordered elements.
The transformation that achieves this is $T=\Sigma^{\frac{1}{2}}U'R^{-1}$ where $Y=RR'$ is a Cholesky factorisation of $Y$ and $R'XR = U\Sigma^{2}U'$ is a singular value decomposition of $R'XR$.
\end{lem}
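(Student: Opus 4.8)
The plan is to track how the stabilising solutions $X$ and $Y$ of the two AREs (\ref{RiccX})--(\ref{RiccY}) behave under the state-space change of coordinates (\ref{transformed1})--(\ref{transformed}), exactly as one balances the controllability and observability gramians in classical balanced truncation, and then to verify by direct substitution that the proposed $T$ makes both solutions equal to a common ordered diagonal matrix.

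First I would establish the transformation law. Writing the transformed realisation as $\hat A = TAT^{-1}$, $\hat B = TB$, $\hat C = CT^{-1}$, $\hat D = D$, and observing that $P = D'D - \alf(D+D') + abI$ is unchanged because it depends only on $D$, $a$, $b$, substitution into (\ref{RiccX}) shows that the $X$-ARE for the transformed realisation is just (\ref{RiccX}) pre- and post-multiplied by $(T^{-1})'$ and $T^{-1}$; likewise the $Y$-ARE for the transformed realisation is (\ref{RiccY}) pre- and post-multiplied by $T$ and $T'$. Hence
\begin{equation}
\hat X = (T^{-1})'XT^{-1}, \qquad \hat Y = TYT'
\end{equation}
solve the transformed AREs. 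Since congruence preserves positive-definiteness, and the stabilising conditions (\ref{StabilisingX})--(\ref{StabilisingY}) become similarity transforms of the original stability matrices and so keep the same spectrum, and the stabilising solution of each ARE is unique, $\hat X$ and $\hat Y$ are precisely the solutions selected by the sector characterisation for the transformed realisation.

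Next I would insert $T = \Sigma^{\frac12}U'R^{-1}$, so that $T^{-1} = RU\Sigma^{-\frac12}$. Using $Y = RR'$ one gets $\hat Y = \Sigma^{\frac12}U'\bigl(R^{-1}Y(R^{-1})'\bigr)U\Sigma^{\frac12} = \Sigma^{\frac12}U'U\Sigma^{\frac12} = \Sigma$, and using $R'XR = U\Sigma^{2}U'$ one gets $\hat X = \Sigma^{-\frac12}U'(R'XR)U\Sigma^{-\frac12} = \Sigma^{-\frac12}U'U\Sigma^{2}U'U\Sigma^{-\frac12} = \Sigma$, where in both cases $U'U = I$. Thus both transformed solutions equal the diagonal matrix $\Sigma$; its entries are the positive square roots of the eigenvalues of the symmetric positive-definite matrix $R'XR$, and ordering them in decreasing magnitude is simply a choice of ordering in the eigendecomposition, giving the ``ordered elements''. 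Well-definedness of $T$ follows because $Y>0$ has a square nonsingular Cholesky factor $R$, because $R'XR$ is a congruence of $X>0$ and hence symmetric positive definite (so $\Sigma^{\frac12}$ is well defined), and because $T$ is a product of nonsingular matrices.

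The main obstacle is the first step: one must be careful substituting the transformed data into (\ref{RiccX}) and (\ref{RiccY}) to confirm that $X$ transforms contravariantly and $Y$ covariantly, in particular that $P$ is invariant, and one must invoke uniqueness of the stabilising solution, since otherwise one would only know that $\hat X$ and $\hat Y$ are \emph{some} solutions rather than the ones appearing in the sector-bounding conditions. The remaining work consists of the routine matrix identities displayed above.
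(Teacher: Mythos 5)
Your proposal is correct and follows essentially the same route as the paper: both arguments verify by direct substitution that under $T=\Sigma^{\frac{1}{2}}U'R^{-1}$ the transformed ARE solutions become $(T^{-1})'XT^{-1}=\Sigma$ and $TYT'=\Sigma$, with the paper carrying out the algebra explicitly for the $X$-equation and deferring the $Y$-equation to symmetry. Your explicit appeal to uniqueness of the stabilising solution is a small point of additional care that the paper elides (it simply notes that $\Sigma$ is a stabilising solution since $H\in[a,b]$ independently of realisation), but it does not change the substance of the argument.
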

\begin{proof}Since the system $H$ is in sector $[a,b]$, so is the transformed system in Eq.~\ref{transformed}.
Forming the appropriate ARE analogous to Eq.~\ref{RiccX} for the transformed system results in a new ARE with solution $X_T$,
\begin{eqnarray}
\lefteqn{{T^{-1}}'A'T'X_T + X_T TAT^{-1} -} \nonumber \\
& & (X_T TB-{T'}^{-1}C'(\alf I - D))P^{-1}(X_TB-{T'}^{-1}C'(\alf I - D))' + \nonumber \\
& & {T'}^{-1}C'CT^{-1}=0 
\label{RiccX_T}
\end{eqnarray}
where $P = D'D - \alf(D+D') + abI$.

Substituting $RR'$ for $Y$ and $R'^{-1}U\Sigma^{2}U'R^{-1}$ for $X$ into Eq.~\ref{RiccX} gives
\begin{eqnarray}
\lefteqn{A'R'^{-1}U\Sigma^{2}U'R^{-1} + R'^{-1}U\Sigma^{2}U'R^{-1}A -} \nonumber \\
& & (R'^{-1}U\Sigma^{2}U'R^{-1}B-C'(\alf I - D))P^{-1}(R'^{-1}U\Sigma^{2}U'R^{-1}B-C'(\alf I - D))' + \nonumber \\
& &  C'C=0.
\end{eqnarray}
Left-multiplying by $T'^{-1}=\Sigma^{-1/2}U'R'$ and right-multiplying by $T^{-1}=RU\Sigma^{-1/2}$ results in
\begin{eqnarray}
\lefteqn{\left[\Sigma^{-1/2}U'R'A'R'^{-1}U\Sigma^{1/2}\right]\Sigma + \Sigma\left[\Sigma^{1/2}U'R^{-1}ARU\Sigma^{-1/2}\right] -}\nonumber \\
& &  (\Sigma\Sigma^{1/2} U'R^{-1}B-\Sigma^{-1/2}U'R'C'(\alf I - D))P^{-1}(\Sigma\Sigma^{1/2}U'R^{-1}B-\Sigma^{-1/2}U'R'C'(\alf I - D))' + \nonumber \\
& & \Sigma^{-1/2}U'R'C'CRU\Sigma^{-1/2}=0. 
\label{RiccX_Sigma}
\end{eqnarray}
Comparison of Eqs.~\ref{RiccX_Sigma} and \ref{RiccX_T} shows that with $T=\Sigma^{\frac{1}{2}}U'R^{-1}$, $\Sigma$ is a solution of Eq.~\ref{RiccX_T}.
The proof for the second ARE analogous to Eq.~\ref{RiccY} proceeds similarly.
Since $H\in[a,b]$ regardless of realisation, $\Sigma$ is a stabilising solution to both AREs.
\end{proof}

\begin{lem}The truncated realisation in Eq.~(\ref{truncated}) is itself sector balanced, stable, and $H_{r}$ is also strictly in sector $[a,b]$.
\end{lem}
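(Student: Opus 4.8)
The plan is to show that the three ARE conditions from \cite{Shim97} which certify $H\in[a,b]$ are inherited by the truncated realisation $H_r$ in Eq.~(\ref{truncated}). By the previous lemma we may assume that $H$ is already in sector-balanced form, so that the solutions of both AREs (\ref{RiccX}) and (\ref{RiccY}) equal the same diagonal, ordered matrix $\Sigma = \diag(\Sigma_1,\Sigma_2)$, partitioned conformally with the block structure in Eq.~(\ref{partitioned}). Note that $P = D'D - \alf(D+D') + abI$ is unchanged by truncation because $D$ is untouched, and $P<0$ is one of the hypotheses we already have. So the first step is to write out ARE (\ref{RiccX}) for the full balanced system with $X=\Sigma$ and read off its $(1,1)$ block; the claim will be that this $(1,1)$ block is \emph{exactly} the corresponding ARE (\ref{RiccX}) for the truncated data $(A_{11},B_1,C_1,D)$ with candidate solution $X_r=\Sigma_1$, \emph{plus} a leftover term coming from the off-diagonal coupling $A_{12},A_{21},B_2,C_2$. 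The same bookkeeping is then repeated for ARE (\ref{RiccY}) with $Y=\Sigma$, whose $(1,1)$ block should similarly reduce to ARE (\ref{RiccY}) for the truncated data with $Y_r=\Sigma_1$ up to an analogous residual.

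The key algebraic fact I would try to establish is that these two residual terms are each negative semidefinite (or, with a little care, can be absorbed so that $\Sigma_1$ solves the truncated AREs with an \emph{inequality} of the right sign, i.e.\ $\Sigma_1>0$ satisfies the Riccati \emph{inequalities} for $H_r$). Since existence of a positive-definite solution to the Riccati inequality is equivalent, by the bounded-real/positive-real-type lemma behind \cite{Shim97}, to the sector membership $H_r\in[a,b]$ — indeed strictly, since $P<0$ strictly and $\Sigma_1>0$ strictly — this gives the sector claim. Stability of $H_{11}=A_{11}$ (equivalently $H_r$ stable) follows the same way: the stabilising conditions (\ref{StabilisingX})--(\ref{StabilisingY}) for the full system, restricted to the $(1,1)$ block and combined with the diagonal structure of $\Sigma$, should force the truncated closed-loop matrix to be Hurwitz; alternatively one invokes the standard fact that a Riccati \emph{inequality} with $X_r>0$ already implies $A_{11}$ stable when $(A_{11},C_1)$ is detectable, and detectability descends from minimality of $H$. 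Finally, that $H_r$ is itself \emph{sector balanced} is immediate once we know $\Sigma_1>0$ solves both truncated AREs: the two solutions coincide ($=\Sigma_1$) and are diagonal with ordered elements by construction of the partition.

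I would organise the write-up as: (i) reduce to the balanced case and record that $P$ and the hypothesis $P<0$ are unchanged; (ii) expand ARE (\ref{RiccX}) block-wise, isolate the $(1,1)$ equation, and identify the residual $\Delta_X$; (iii) show $\Delta_X \le 0$, hence $\Sigma_1$ satisfies the truncated $X$-Riccati inequality; (iv) repeat (ii)--(iii) for ARE (\ref{RiccY}) to get $\Sigma_1$ for the truncated $Y$-inequality; (v) conclude $H_r\in[a,b]$ strictly via \cite{Shim97} (or its Riccati-inequality form), deduce stability, and observe the balancing is preserved. The main obstacle is step (iii): showing the off-diagonal coupling terms assemble into a sign-definite residual. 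For the Lyapunov special case (the $[-\infty,\infty]$ limit mentioned in the paper, where the AREs become the gramian Lyapunov equations) this is the classical truncation argument and the residual is manifestly $-A_{12}\Sigma_2A_{12}'$-type and $\le 0$; the work here is to check that the cross term $(XB - C'(\alf I - D))P^{-1}(\cdot)'$ does not spoil this, which is where $P<0$ (so $-P^{-1}>0$, completing squares the right way) is essential. If a clean sign cannot be extracted directly, the fallback is to pass through the supply-rate/dissipation inequality characterisation: the storage function $x'\Sigma x$ certifies $w\ge0$ along trajectories of $H$, and restricting to the invariant-in-the-truncation-sense subspace gives a storage function $x_1'\Sigma_1 x_1$ for $H_r$, which is the coordinate-free version of the same computation and may be cleaner to present.
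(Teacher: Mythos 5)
Your overall strategy --- put $H$ in sector-balanced form, block-partition the two AREs and the stabilising conditions conformally with Eq.~(\ref{partitioned}), and read off the $(1,1)$ blocks --- is exactly the paper's. However, the step you single out as ``the main obstacle'' (your step (iii)) rests on a miscalculation, and it is precisely at this point that your proposal diverges from what actually happens. When $X=\Sigma=\diag(\Sigma_1,\Sigma_2)$ is block diagonal, the $(1,1)$ block of every term of Eq.~(\ref{RiccX}) involves only $(A_{11},B_1,C_1,D,\Sigma_1)$: the $(1,1)$ block of $A'X+XA$ is $A_{11}'\Sigma_1+\Sigma_1A_{11}$ (the couplings $A_{12},A_{21}$ land only in the off-diagonal and $(2,2)$ blocks), the first block row of $XB-C'(\alf I-D)$ is $\Sigma_1B_1-C_1'(\alf I-D)$ so the $(1,1)$ block of the quadratic term is $(\Sigma_1B_1-C_1'(\alf I-D))P^{-1}(\Sigma_1B_1-C_1'(\alf I-D))'$, and the $(1,1)$ block of $C'C$ is $C_1'C_1$. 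There is no residual $\Delta_X$: $\Sigma_1$ satisfies the truncated ARE \emph{exactly, as an equality}, which is the entire content of the paper's ``direct substitution''. Your claim that even in the Lyapunov limit the $(1,1)$ block carries a remainder of the form $-A_{12}\Sigma_2A_{12}'$ is likewise incorrect (you may be conflating this with the stability or error-bound parts of the classical balanced-truncation proof, where the off-diagonal blocks genuinely enter). Had you carried out your step (ii) explicitly you would have found the residual vanishes and steps (iii)--(iv) evaporate; as written, the proposal directs all its effort at a phantom difficulty and leaves the actual (easy) computation undone.

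Two secondary points. Your fallback route to stability --- a Riccati inequality with $X_r>0$ plus ``detectability of $(A_{11},C_1)$, which descends from minimality of $H$'' --- is not sound as stated: truncations of minimal realisations need not be minimal or detectable (one needs, e.g., that $\Sigma_1$ and $\Sigma_2$ share no diagonal entries, a hypothesis neither you nor the paper records). The paper instead substitutes the block-diagonal $X$ into the stabilising condition~(\ref{StabilisingX}), writes the resulting matrix inequality as a quadratic form, and restricts to vectors with $x_2=0$ to obtain the truncated stabilising condition directly; you gesture at this route but then set it aside. On the other hand, your observations that $P$ and the hypothesis $P<0$ are unaffected by truncation, and that sector-balancedness of $H_r$ is automatic once $\Sigma_1$ is seen to solve both truncated AREs, are correct and agree with the paper.
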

\begin{proof}If $X=\mat{cc}{X_1 & 0 \\ 0 & X_2}>0$ then direct substitution into Eq. \ref{RiccX} shows that 
\begin{equation}
A_{11}'X_1 + X_1 A_{11} - (X_1 B_1 -C_1'(\alf I - D))P^{-1}(X_1 B_1-C_1'(\alf I - D))' + C_1'C_1=0
\label{RiccX_1}
\end{equation}
and similarly for $X_2$.

Furthermore, direct substitution of $X=\mat{cc}{X_1 & 0 \\ 0 & X_2}$ into Eq. \ref{StabilisingX} gives
\begin{eqnarray}
\lefteqn{\mat{cc}{A_{11} & A_{12} \\ A_{21} & A_{22}} + \mat{cc}{B_1 P^{-1}(\alf I-D') C_{1} & B_1 P^{-1}(\alf I-D') C_{2} \\ B_2 P^{-1}(\alf I-D') C_{1} & B_2 P^{-1}(\alf I-D') C_{2}} -} \nonumber \\
& & \mat{cc}{B_1 P^{-1}B_1'X_1 & B_1 P^{-1}B_2'X_2 \\ B_2 P^{-1}B_1'X_1  & B_2 P^{-1}B_2'X_2 } <0.
\end{eqnarray}
Considering this as of the form \[\mat{cc}{\alpha & \beta \\ \gamma & \delta}<0\] means that 
\[\mat{cc}{x_1' & x_2'} \mat{cc}{\alpha & \beta \\ \gamma & \delta} \mat{c}{x_1 \\ x_2} <0 \quad \forall x,~x= \mat{c}{x_1 \\ x_2}.\]
Since this is true for all $x$ it must be true for the case $x_2=0$. Therefore
\begin{equation}
A_{11} - B_1 P^{-1}(\alf I-D)'C_1 - B_1P^{-1}B_1'X_1<0
\end{equation}
so $X_1$ itself is a stabilising solution of the reduced ARE~(\ref{RiccX_1}) (and similarly for $X_2$). A similar argument holds for the second ARE~(\ref{RiccY}). The result then follows immediately taking $X=\Sigma$.
\end{proof}

\end{document}